\newcommand{\version}{September 1, 2011}
\swapnumbers \pagestyle{myheadings}
\theoremstyle{plain}
\newtheorem{thm}{THEOREM}
\newcommand{\beq}{\begin{equation}}
\newcommand{\eeq}{\end{equation}}
\def\beqa{\begin{eqnarray}}
\def\eeqa{\end{eqnarray}}
\newcommand{\Tr}{{\rm Tr}}
\newcommand{\Hh}{{\mathcal H}}
\newcommand{\K}{{\mathcal K}}
\newcommand{\tdt}{\!\cdot\!}
\def\bce{\begin{center}}
\def\ece{\end{center}}
\def\bit{\begin{itemize}}
\def\eit{\end{itemize}}
\date{\small\version}
\begin{document}
\markboth{\scriptsize{An Inequality for the trace of matrix products, using absolute values \qquad  \version}}
         {\scriptsize{An Inequality for the trace of matrix products, using absolute values \qquad  \version}}

\title{
\bf{An Inequality for the trace of matrix products, using absolute values}}
\vspace{60pt}
\author{ \vspace{20pt} Bernhard Baumgartner$^1$
\\
\vspace{-4pt}
\small{Fakult\"at f\"ur Physik, Universit\"at Wien}\\
\small{Boltzmanngasse 5, A-1090 Vienna, Austria}}

\maketitle


\vspace{60pt}

\begin{abstract}
The absolute value of matrices is used in order to give inequalities for the trace of products.
An application gives a very short proof of the tracial matrix H\"{o}lder inequality.
\\[20ex]
Keywords:
matrix, absolute value, trace inequality, H\"{o}lder inequality
\\[3ex]
MSC: 39B42,  \qquad   15A45 , \quad  47A50, \quad    PACS: 02.10.Yn

\end{abstract}

\footnotetext[1]{\texttt{Bernhard.Baumgartner@univie.ac.at}}


\newpage

\section{The baby inequality and its application}\label{baby}

\begin{thm} \textbf{Using absolute values:}\label{absolute}                        
Consider two complex $m\times n$ matrices $A$, $B$ and their absolute values,
$|A|=(A^\ast A)^{1/2},\quad |A^\ast|=(A A^\ast)^{1/2}$.
Then
\beq \label{babyformula}
|\Tr\, A^\ast B| \leq (\Tr\, |A|\tdt |B|)^{1/2} \cdot (\Tr\, |A^\ast|\tdt |B^\ast|)^{1/2}
\eeq
\end{thm}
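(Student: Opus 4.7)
The plan is to reduce the inequality to the Hilbert--Schmidt (tracial) Cauchy--Schwarz inequality
$|\Tr(C^{\ast}D)|^{2}\le \Tr(C^{\ast}C)\cdot\Tr(D^{\ast}D)$
after rewriting $\Tr A^{\ast}B$ by means of polar decompositions. Concretely, I would write $A=U|A|$ and $B=V|B|$ with partial isometries $U,V$ (these are $m\times n$, with $U^{\ast}U$ the projection onto $\range|A|$, etc.). The key algebraic identities I will invoke are the polar-decomposition relations $U|A|U^{\ast}=|A^{\ast}|$ and $V|B|V^{\ast}=|B^{\ast}|$, which hold on the ranges of $U$ and $V$; since both sides vanish on the orthogonal complements, the identities are actually global.

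Next I would rewrite $\Tr A^{\ast}B=\Tr(|A|U^{\ast}V|B|)$, split each absolute value into a product of square roots, and use cyclicity to put the expression in the form
\[
\Tr A^{\ast}B \;=\; \Tr\!\bigl(\,|B|^{1/2}|A|^{1/2}\cdot |A|^{1/2}U^{\ast}V|B|^{1/2}\,\bigr).
\]
Setting $C:=|A|^{1/2}|B|^{1/2}$ and $D:=|A|^{1/2}U^{\ast}V|B|^{1/2}$, this is exactly $\Tr(C^{\ast}D)$, so Cauchy--Schwarz gives
\[
|\Tr A^{\ast}B|^{2}\le \Tr(C^{\ast}C)\cdot\Tr(D^{\ast}D).
\]

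The last step is to identify the two factors. For the first, $C^{\ast}C=|B|^{1/2}|A|\,|B|^{1/2}$, whose trace is $\Tr(|A|\tdt|B|)$ by cyclicity. For the second, $D^{\ast}D=|B|^{1/2}V^{\ast}U|A|U^{\ast}V|B|^{1/2}$; applying $U|A|U^{\ast}=|A^{\ast}|$ collapses the middle to $|A^{\ast}|$, then $V|B|V^{\ast}=|B^{\ast}|$ (after one more cyclic rotation) collapses the outer factor, giving $\Tr(D^{\ast}D)=\Tr(|A^{\ast}|\tdt|B^{\ast}|)$. Taking square roots yields \eqref{babyformula}.

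The main obstacle, and the only thing requiring care, is the handling of the partial isometries $U,V$ in the non-invertible / rectangular case: one must check that $U|A|U^{\ast}=|A^{\ast}|$ really holds as an equality of $m\times m$ matrices (not merely modulo kernels) and that the square-root manipulations of $|A|$ and $|B|$ do not lose information on their nullspaces. Both are easily verified from the definition of the polar decomposition, so the remaining steps are purely mechanical cyclic-trace manipulations plus the standard Cauchy--Schwarz bound.
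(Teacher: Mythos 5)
Your proof is correct. The one point you flag as needing care, $U|A|U^{\ast}=|A^{\ast}|$, does hold globally: $U|A|U^{\ast}$ is positive semidefinite, its square is $U|A|\,(U^{\ast}U)\,|A|U^{\ast}=U|A|^{2}U^{\ast}=AA^{\ast}$ because $U^{\ast}U$ is the projection onto $\range|A|$, and uniqueness of the positive square root does the rest; likewise for $V|B|V^{\ast}=|B^{\ast}|$. The sizes also match up in the rectangular case ($C$ and $D$ are both $n\times n$, and the two traces on the right are over $n\times n$ and $m\times m$ matrices respectively), so the Hilbert--Schmidt Cauchy--Schwarz step is legitimate.

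Your route differs from the paper's in presentation though not in substance. The paper expands everything in singular vectors, writing $\Tr A^{\ast}B=\sum_{i,j}a_i b_j\langle g_j,e_i\rangle\langle f_i,h_j\rangle$ and applying the scalar Cauchy--Schwarz inequality to that double sum; you instead work basis-free with the polar decomposition and apply the tracial Cauchy--Schwarz inequality $|\Tr(C^{\ast}D)|^{2}\le\Tr(C^{\ast}C)\,\Tr(D^{\ast}D)$. These are the same inequality in different coordinates: in the singular bases your operators satisfy $\langle e_i,Cg_j\rangle=a_i^{1/2}b_j^{1/2}\langle e_i,g_j\rangle$ and $\langle e_i,Dg_j\rangle=a_i^{1/2}b_j^{1/2}\langle f_i,h_j\rangle$, which are exactly the two factors the paper feeds into Cauchy--Schwarz, so the splitting of the trace is identical. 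What your version buys is cleaner bookkeeping (no completion of bases on kernels, no index chasing) and immediate portability to infinite dimensions; what the paper's version buys is an explicit description of the equality cases ($\langle e_i,g_j\rangle=\alpha\langle f_i,h_j\rangle$ for all $i,j$), which in your language is the statement that $C$ and $D$ are proportional -- worth noting explicitly if you want to recover that remark.
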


\begin{proof}
Let $e_i$ be an orthonormal basis made of normalized eigenvectors of $A^\ast A$, with eigenvalues $a_i^2 \geq0$.
For $a_i\neq 0$,  $f_i=a_i^{-1}Ae_i$ are normalized eigenvectors of $AA^\ast$, obeying
$A^\ast f_i=a_ie_i$. Eventually, to make a full basis, this set has to be completed
by introducing eigenvectors of  $AA^\ast$ with eigenvalue $0$.
Analogously, there are basis-sets of normalized vectors $g_j$ and $h_j$,
such that $Bg_j=b_j h_j$, $B^\ast h_j=b_j g_j$, with $b_j\geq 0$.
With these vectors we get
\beqa
|\Tr\, A^\ast B|   &=& \left| \sum_{i,j} a_i b_j \langle g_j, e_i\rangle\langle f_i, h_j\rangle\right|.\label{babyproof1} \\
                    &&         \textrm{Applying the Cauchy-Schwarz inequality gives}  \nonumber \label{babyproof2}\\
 |\Tr\, A^\ast B|  &\leq & \left( \sum_{i,j} a_i b_j \langle e_i,g_j\rangle\langle g_j,e_i\rangle \right)^{1/2}
                     \left( \sum_{i,j} a_i b_j \langle f_i,h_j\rangle\langle h_j,f_i\rangle \right)^{1/2} \\
             &=& \left(\Tr\, |A|\tdt|B| \right)^{1/2} \left(\Tr\, |A^\ast|\tdt|B^\ast| \right)^{1/2}
\eeqa
In the last step the identities $|A|e_i=a_ie_i$,
 $|A^\ast|f_i=a_if_i$,  $|B|g_j=b_jg_j$, and   $|B^\ast|h_j=b_jh_j$
 have been used.
\end{proof}

We remark that the inequality is sharp.
It becomes an equality in case both matrices $A$ and $B$ have rank one.
This follows from the fact that the ``sum'' in (\ref{babyproof1}) consists of only one term,
so the Schwarz inequality in (\ref{babyproof2}) becomes an equality.
A simple example is given with $2\times 2$ matrices:
\beqa \label{twomatrices}
A&=&\left( \begin{array}{cc}1&0\\ 0&0 \end{array}\right), \quad
B=\left( \begin{array}{cc}1&1\\ 0&0 \end{array}\right), \\ \nonumber
\medskip
 \textrm{with}\medskip\quad &|B|&=\quad\frac 1 {\sqrt{2}}\left( \begin{array}{cc}1&1\\ 1&1 \end{array}\right),\qquad
|B^\ast|\quad=\quad\sqrt 2 \left( \begin{array}{cc}1&0\\ 0&0 \end{array}\right);
\eeqa
\medskip
so \quad $\Tr\,A^\ast B=1$, \quad $\Tr\,|A|\tdt|B|=1/\sqrt 2$, \quad $\Tr\,|A^\ast|\tdt|B^\ast|=\sqrt 2$.
\newline
Other cases where (\ref{babyproof2}) becomes an equality are $\langle e_i,g_j\rangle = \alpha \langle f_i,h_j\rangle$
$\forall i, j$, for some constant $\alpha$.

The example shows also that the distinction between the absolute values $|B|$ and $|B^\ast|$
is necessary. Only
if both $A$ and $B$ are normal matrices the absolute values are equal, $|A^\ast|=|A|$, $|B^\ast|=|B|$.
In such a case equation (\ref{babyformula}) becomes
$|\Tr\, A^\ast B| \leq \Tr\, |A|\tdt |B|$.

The following application was at the origin of my investigations, searching for a simple proof of the H\"{o}lder inequality
for matrices and operators. (For other proofs, see f.e. \cite{MBR72, RS75, RB97, EC09}\,)

\begin{thm} \textbf{Matrix H\"{o}lder Inequality:}
Consider two \,$m\times m$\, matrices $A$, $B$ and their absolute values, then
\beq
|\Tr\, A^\ast B|\leq \left(\Tr\,|A|^p \right)^{1/p}\cdot \left(\Tr\,|B|^q \right)^{1/q},
\qquad 1\leq p,q\leq\infty,\quad p^{-1}+q^{-1}=1
\eeq
\end{thm}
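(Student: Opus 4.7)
The plan is to invoke Theorem~\ref{absolute} once to pass from general matrices to positive semidefinite ones, and then to dispatch the resulting positive-matrix H\"older inequality by spectral decomposition plus scalar H\"older against a doubly stochastic kernel.

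Concretely, Theorem~\ref{absolute} immediately gives
\[
|\Tr A^*B|\le\bigl(\Tr|A|\tdt|B|\bigr)^{1/2}\bigl(\Tr|A^*|\tdt|B^*|\bigr)^{1/2},
\]
so it suffices to establish the auxiliary inequality $\Tr XY\le(\Tr X^p)^{1/p}(\Tr Y^q)^{1/q}$ for any positive semidefinite $X,Y$. Applying it with $(X,Y)=(|A|,|B|)$ and with $(X,Y)=(|A^*|,|B^*|)$, and using that $A^*A$ and $AA^*$ share their nonzero spectrum (hence $\Tr|A^*|^p=\Tr|A|^p$, and likewise for $B$), the two resulting square roots combine into exactly $(\Tr|A|^p)^{1/p}(\Tr|B|^q)^{1/q}$.

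For the auxiliary positive inequality I would spectrally decompose $X=\sum_i x_i|e_i\rangle\langle e_i|$ and $Y=\sum_j y_j|g_j\rangle\langle g_j|$, so that
\[
\Tr XY=\sum_{i,j}c_{ij}\,x_iy_j,\qquad c_{ij}:=|\langle e_i,g_j\rangle|^2.
\]
Since $\{e_i\}$ and $\{g_j\}$ are orthonormal bases, the kernel $(c_{ij})$ is doubly stochastic. Scalar H\"older against the measure $c_{ij}$ then gives
\[
\sum_{i,j}c_{ij}\,x_iy_j\le\Bigl(\sum_{i,j}c_{ij}\,x_i^p\Bigr)^{1/p}\Bigl(\sum_{i,j}c_{ij}\,y_j^q\Bigr)^{1/q},
\]
and the unit marginals reduce the right side to $(\sum_i x_i^p)^{1/p}(\sum_j y_j^q)^{1/q}=(\Tr X^p)^{1/p}(\Tr Y^q)^{1/q}$.

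I do not expect any step to be a genuine obstacle. The only care needed is at the endpoints $p=1$, $q=\infty$ (handled either by continuity in $p$ or by the direct bound $\Tr X|B|\le\|B\|\Tr X$ for $X\ge0$, with $\|\cdot\|$ the operator norm) and in checking the singular-value identity $\Tr|A^*|^p=\Tr|A|^p$, which is immediate from the spectral calculus.
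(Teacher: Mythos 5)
Your proposal is correct and follows essentially the same route as the paper: reduce via Theorem~\ref{absolute} to the positive case, expand $\Tr\,|A|\tdt|B|=\sum_{i,j}a_ib_j|\langle e_i,g_j\rangle|^2$ in eigenbases, apply scalar H\"older against that kernel, and use completeness of the bases (your ``unit marginals'') together with $\Tr\,|A^\ast|^p=\Tr\,|A|^p$ to combine the two square-root factors. Your explicit attention to the endpoints $p=1$, $q=\infty$ is a small addition the paper leaves implicit.
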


\begin{proof}
Using the same notation as in the proof of Theorem \ref{absolute}, we note that $\Tr\, |A|^p=\Tr\, |A^\ast|^p=\sum_i a_i^p$.
So, the H\"{o}lder Inequality for the left hand side of (\ref{babyformula}) is proven, if it holds for each factor on the right hand side.
There we have normal operators. For these we can use the classical H\"{o}lder Inequality for weighted sums,
followed by using completeness of the basis sets $\{e_i\}$ and $\{g_j\}$:
\beqa
\Tr\, |A|\tdt |B|= \sum_{i,j} a_i b_j \, |\langle e_i,g_j\rangle|^2 &\leq & \nonumber
\left( \sum_{i,j} a_i^p \, |\langle e_i,g_j\rangle|^2\right)^{1/p}\cdot
\left( \sum_{i,j} b_j^q \, |\langle e_i,g_j\rangle|^2\right)^{1/q}\\
=\left( \sum_i a_i^p \right)^{1/p}\cdot \left( \sum_j b_j^q \right)^{1/q} &=& \nonumber
(\Tr\,|A|^p)^{1/p} \cdot (\Tr\,|B|^q)^{1/q}.
\eeqa
Analogously for $\Tr\, |A^\ast|\tdt |B^\ast|$.
\end{proof}

If one is interested in characterizing cases of equality for the matrix H\"{o}lder inequality,
one has to check whether  $\langle e_i,g_j\rangle = \alpha \langle f_i,h_j\rangle$,
as stated above, and also whether the classical  H\"{o}lder inequality used in the proof becomes an equality.

\section{Generalizations}\label{general}
There are possibilities to generalize the baby inequality. It can grow by:
Inserting extra matrices, one $m\times m$ another one $n\times n$;
using different exponents for the different absolute values;
going into vector spaces with infinite dimension.

One can insert extra matrices $M$ and $N$:
\begin{thm}\textbf{Inequality with intermediate matrices:}
\beq
|\Tr\, M A^\ast N B|\leq \left(\Tr\,M \,|A|\, M^\ast\, |B| \, \right)^{1/2}\cdot \left(\Tr\,N^\ast \,|A^\ast|\, N\, |B^\ast|\, \right)^{1/2}
\eeq
\end{thm}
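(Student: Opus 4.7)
The plan is to mimic the proof of Theorem~\ref{absolute} essentially verbatim, using the singular value decompositions of $A$ and $B$ and then applying Cauchy-Schwarz to a weighted double sum. The only new wrinkle is that the matrix elements of $M$ and $N$ have to be shepherded into the right places so that, after Cauchy-Schwarz, they reassemble into the two traces on the right-hand side.

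First I would introduce the same orthonormal basis sets $\{e_i\}$, $\{f_i\}$ for $A$ and $\{g_j\}$, $\{h_j\}$ for $B$ as in the proof of Theorem~\ref{absolute}, with $Ae_i=a_if_i$, $A^\ast f_i=a_ie_i$, $Bg_j=b_jh_j$, $B^\ast h_j=b_jg_j$ (completed by zero-eigenvectors where needed). Using $A^\ast=\sum_i a_i|e_i\rangle\langle f_i|$ and $B=\sum_j b_j|h_j\rangle\langle g_j|$, a direct expansion gives
\beq
\Tr\, MA^\ast NB \;=\; \sum_{i,j} a_i b_j \,\langle g_j,Me_i\rangle\,\langle f_i,Nh_j\rangle,
\eeq
the analogue of (\ref{babyproof1}).

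Next I would apply the Cauchy-Schwarz inequality to this double sum, splitting the weight $a_ib_j$ evenly between the two factors; this produces
\beq
|\Tr\, MA^\ast NB|\;\leq\;\Bigl(\sum_{i,j}a_ib_j|\langle g_j,Me_i\rangle|^2\Bigr)^{1/2}\Bigl(\sum_{i,j}a_ib_j|\langle f_i,Nh_j\rangle|^2\Bigr)^{1/2}.
\eeq
The concluding step is the identification of the two factors. Writing $|\langle g_j,Me_i\rangle|^2=\langle g_j,Me_i\rangle\langle e_i,M^\ast g_j\rangle$ and summing first over $i$ using the spectral expansion $|A|=\sum_i a_i|e_i\rangle\langle e_i|$ collapses the $i$-sum to $\langle g_j, M|A|M^\ast g_j\rangle$; summing over $j$ with the weight $b_j=|B|g_j$ then yields $\Tr\,M|A|M^\ast|B|$ by completeness of $\{g_j\}$. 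An analogous manipulation on the second factor, using $|B^\ast|=\sum_j b_j|h_j\rangle\langle h_j|$ and $|A^\ast|=\sum_i a_i|f_i\rangle\langle f_i|$, gives $\Tr\,N|B^\ast|N^\ast|A^\ast|$, which equals $\Tr\,N^\ast|A^\ast|N|B^\ast|$ by cyclicity of the trace.

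The argument is essentially forced; the main obstacle is merely bookkeeping, namely making sure the conjugations and the order of $M,M^\ast$ (and $N,N^\ast$) come out as stated. In particular one should check that the $M$-factor gets sandwiched as $M|A|M^\ast$ (not $M^\ast|A|M$), which is traced back to the fact that in $\langle g_j,Me_i\rangle$ the matrix $M$ acts on the $A$-basis vector $e_i$, while its adjoint appears when the complex conjugate is written as $\langle e_i,M^\ast g_j\rangle$; the placement for $N$ is the mirror image, explaining the $N^\ast\cdots N$ pattern on the right-hand side.
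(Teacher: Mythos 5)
Your proposal is correct and follows exactly the route the paper intends: the paper's own (one-line) proof says to repeat the argument of Theorem~\ref{absolute} with $\langle g_j,e_i\rangle$ replaced by $\langle g_j,Me_i\rangle$ and $\langle f_i,h_j\rangle$ by $\langle f_i,Nh_j\rangle$, which is precisely your expansion, Cauchy--Schwarz step, and resummation. Your bookkeeping of the $M,M^\ast$ and $N,N^\ast$ placements (including the final use of cyclicity of the trace) checks out.
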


\begin{proof}
Just insert the extra matrices in the right places in the inner products appearing in (\ref{babyproof1}) and (\ref{babyproof2}):
Replace $\langle g_j,e_i\rangle$ by $\langle g_j,M e_i\rangle$ and $\langle f_i,h_j\rangle$ by  $\langle f_i,N h_j\rangle$.
\end{proof}

There is the possibility to consider different exponents:
\begin{thm}\textbf{Inequality with exponents:}
\beq
|\Tr\, A^\ast B| \leq \left(\Tr\,|A|^\alpha \, |B|^\beta \, \right)^{1/2}\cdot \left(\Tr\,|A^\ast|^{2-\alpha}\, |B^\ast|^{2-\beta} \right)^{1/2}, \qquad  0\leq \alpha, \beta \leq 2, \nonumber
\eeq
 with $0^0=0$, so that $|A|^0= \textrm{projector onto range}(|A|)$.
\end{thm}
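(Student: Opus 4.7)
The plan is to revisit the proof of Theorem~\ref{absolute} and insert a weighted splitting of the singular values before applying Cauchy--Schwarz. I would keep exactly the same spectral setup: orthonormal bases $\{e_i\},\{f_i\}$ with $|A|e_i=a_ie_i$, $|A^\ast|f_i=a_if_i$, and $\{g_j\},\{h_j\}$ with $|B|g_j=b_jg_j$, $|B^\ast|h_j=b_jh_j$, all $a_i,b_j\ge 0$. The starting identity is still
$$|\Tr\,A^\ast B| \;=\; \Big|\sum_{i,j} a_i b_j\,\langle g_j,e_i\rangle\langle f_i,h_j\rangle\Big|,$$
as derived in (\ref{babyproof1}).

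The key algebraic move is the nonnegative factorization
$$a_i b_j \;=\; \bigl(a_i^{\alpha/2}\,b_j^{\beta/2}\bigr)\cdot\bigl(a_i^{(2-\alpha)/2}\,b_j^{(2-\beta)/2}\bigr),\qquad 0\le\alpha,\beta\le 2,$$
which is valid for all $a_i,b_j\ge 0$ once the convention $0^0=0$ is adopted: at the boundary values $\alpha,\beta\in\{0,2\}$ both factors still vanish when the singular value does, so both sides equal $0$. Applying Cauchy--Schwarz to this splitting, exactly as in the passage from (\ref{babyproof1}) to (\ref{babyproof2}), gives
$$|\Tr\,A^\ast B|\;\le\;\Big(\sum_{i,j} a_i^{\alpha} b_j^{\beta}\,\langle e_i,g_j\rangle\langle g_j,e_i\rangle\Big)^{1/2}\Big(\sum_{i,j} a_i^{2-\alpha} b_j^{2-\beta}\,\langle f_i,h_j\rangle\langle h_j,f_i\rangle\Big)^{1/2}.$$

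The final step is to recognize each factor as a trace. Because $|A|^\alpha e_i=a_i^\alpha e_i$ and $|B|^\beta g_j=b_j^\beta g_j$, summing over $j$ first via completeness of $\{g_j\}$ collapses the inner sum to $\langle e_i,|B|^\beta e_i\rangle$, and then summing over $i$ produces $\Tr\,|A|^\alpha |B|^\beta$; the same manipulation applied to $|A^\ast|^{2-\alpha}f_i=a_i^{2-\alpha}f_i$ and $|B^\ast|^{2-\beta}h_j=b_j^{2-\beta}h_j$ converts the second factor into $\Tr\,|A^\ast|^{2-\alpha}|B^\ast|^{2-\beta}$. Setting $\alpha=\beta=1$ recovers Theorem~\ref{absolute}.

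The only real obstacle is the bookkeeping at the boundary of the exponent range: one must verify that the convention $0^0=0$, which makes $|A|^0$ the projector onto $\range(|A|)$ rather than the identity, is precisely what is needed both for the factorization of $a_ib_j$ to stay consistent with the trivial value $0$ at degenerate singular values, and for the spectral rewriting of the two Cauchy--Schwarz factors to include only the contributing basis vectors (the completion vectors with zero singular value carry weight $0$ on both sides). Once this is checked, the proof is a direct three-line mimic of the baby case.
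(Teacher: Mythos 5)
Your proof is correct and follows essentially the same route as the paper: the author likewise keeps the spectral setup of Theorem~\ref{absolute}, splits the summand into $\varphi_{i,j}=a_i^{\alpha/2}b_j^{\beta/2}\langle e_i,g_j\rangle$ and $\psi_{i,j}=a_i^{1-\alpha/2}b_j^{1-\beta/2}\langle h_j,f_i\rangle$ (the same factorization you use, since $(2-\alpha)/2=1-\alpha/2$), and applies Cauchy--Schwarz before reassembling the traces. Your extra care about the $0^0=0$ convention at the boundary exponents matches the paper's remark that terms with vanishing singular values are simply absent.
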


\begin{proof}
Modify (\ref{babyproof1}) and (\ref{babyproof2}) as
\beq
|\Tr\, A^\ast B|   = \left| \sum_{i,j} \varphi_{i,j}\cdot\psi_{i,j} \right|
                   \leq \left( \sum_{i,j} |\varphi_{i,j}|^2\right)^{1/2} \cdot \left( \sum_{i,j} |\psi_{i,j}|^2\right)^{1/2},
\eeq
with \quad $\varphi_{i,j}= a_i ^{\alpha/2} b_j ^{\beta/2} \langle e_i,g_j\rangle$,
\quad $\psi_{i,j}= a_i ^{1-\alpha/2} b_j ^{1-\beta/2} \langle h_j,f_i\rangle$.
Observe, that for $a_i=0$ the matrix elements involving $e_i$ or $f_i$ are just absent.
The same holds for $b_j, g_j, h_j$.
\end{proof}

Extensions into infinite dimensions can be done in different ways.
I present the following result:
\begin{thm}\textbf{Inequality for two Hilbert Schmidt class operators:}
Let $A$ and $B$ be operators from Hilbert space $\Hh$ to the Hilbert space $\K$, with the
properties $\Tr_\Hh \, A^\ast A=\Tr_\K \, A\tdt A^\ast<\infty$ and  $\Tr_\Hh \, B^\ast B=\Tr_\K \, B\tdt B^\ast<\infty$. Then
\beq
|\Tr_\Hh \,A^\ast B|\leq \left(\Tr_\Hh \,|A|\,|B|\,\right)^{1/2}\cdot \left(\Tr_\K \,|A^\ast|\,|B^\ast| \right)^{1/2} \label{HSinequ}
\eeq
\end{thm}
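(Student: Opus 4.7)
The plan is to port the finite-dimensional argument almost verbatim, relying on the fact that Hilbert--Schmidt operators admit a singular value decomposition indexed by a discrete, square-summable sequence.

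First I would set up the canonical SVD data. Since $A$ is Hilbert--Schmidt, $A^\ast A$ is a positive trace-class operator on $\Hh$, hence compact with a purely discrete spectrum: there is an orthonormal basis $\{e_i\}$ of $\Hh$ with $A^\ast A\,e_i = a_i^2 e_i$, $a_i\ge 0$, and $\sum_i a_i^2<\infty$. Setting $f_i=a_i^{-1}Ae_i$ for $a_i\ne 0$ and completing with an orthonormal basis of $\ker(AA^\ast)\subset\K$ yields an ONB $\{f_i\}$ of $\K$ satisfying $Ae_i=a_if_i$, $A^\ast f_i=a_ie_i$, $|A|e_i=a_ie_i$, $|A^\ast|f_i=a_if_i$, exactly as in the matrix case. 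Do the same for $B$ to obtain $\{g_j\}\subset\Hh$ and $\{h_j\}\subset\K$ with singular values $b_j$.

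Next I would compute the trace. Because $A^\ast B$ is the product of two Hilbert--Schmidt operators, it is trace class, so $\Tr_\Hh A^\ast B=\sum_i\langle e_i,A^\ast Be_i\rangle$. Expanding $Ae_i=a_if_i$ and $Be_i=\sum_j b_j\langle g_j,e_i\rangle h_j$ (valid in norm since $B$ is bounded and $\{g_j\}$ is an ONB) gives, formally,
\beq
\Tr_\Hh A^\ast B=\sum_{i,j}a_ib_j\,\langle g_j,e_i\rangle\,\langle f_i,h_j\rangle.
\eeq
To justify turning this formal expression into an absolutely convergent double series, I would observe that $|A|$ and $|B|$ are themselves Hilbert--Schmidt, hence $|A|\cdot|B|$ is trace class; computing its trace in the basis $\{e_i\}$ (and using $|B|g_j=b_jg_j$) gives
\beq
\Tr_\Hh|A|\tdt|B|=\sum_{i,j}a_ib_j\,|\langle e_i,g_j\rangle|^2<\infty,
\eeq
and the analogous identity in $\K$ with the basis $\{f_i\}$ gives $\Tr_\K|A^\ast|\tdt|B^\ast|=\sum_{i,j}a_ib_j\,|\langle f_i,h_j\rangle|^2<\infty$. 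The Cauchy--Schwarz inequality applied entrywise then yields $\sum_{i,j}a_ib_j|\langle g_j,e_i\rangle||\langle f_i,h_j\rangle|\le(\Tr|A||B|)^{1/2}(\Tr|A^\ast||B^\ast|)^{1/2}<\infty$, which legitimizes the double-sum rearrangement and, applied to the signed terms, simultaneously delivers the desired inequality.

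The main obstacle is purely bookkeeping: making sure that the various sums converge in the senses required (norm, absolute, trace-class), and that the orthonormal bases $\{f_i\}$ and $\{h_j\}$ in $\K$ are genuinely complete even when $\ker(AA^\ast)$ or $\ker(BB^\ast)$ is infinite-dimensional. Once the trace-class / Hilbert--Schmidt ideal structure is invoked to ensure $\sum a_i^2,\sum b_j^2<\infty$ and $|A|\cdot|B|$, $|A^\ast|\cdot|B^\ast|$ are trace class, the argument collapses to the same three-line Cauchy--Schwarz manipulation used in the proof of Theorem~\ref{absolute}.
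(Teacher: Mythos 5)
Your proof is correct, but it takes a genuinely different route from the paper's. The paper proves the Hilbert--Schmidt case by truncation: it introduces the finite-rank operators $A_N=\sum_i^N|f_i\rangle a_i\langle e_i|$ and $B_N=\sum_i^N|h_i\rangle b_i\langle g_i|$, applies the finite-dimensional Theorem~\ref{absolute} to them, and then passes to the limit $N\rightarrow\infty$ using the convergence $\|\,|A|-|A_N|\,\|_{HS}=(\sum_{N+1}^\infty a_i^2)^{1/2}\rightarrow 0$ together with the joint norm-continuity of the Hilbert--Schmidt inner product, so that both sides of the inequality converge to the desired quantities. You instead rerun the finite-dimensional Cauchy--Schwarz argument directly on the infinite double series, and the work shifts to justifying the rearrangement: you correctly observe that $\sum_{i,j}a_ib_j|\langle e_i,g_j\rangle|^2=\Tr_\Hh|A|\tdt|B|<\infty$ and its analogue in $\K$ are sums of nonnegative terms (so Tonelli applies), and that entrywise Cauchy--Schwarz then gives absolute convergence of $\sum_{i,j}a_ib_j\langle g_j,e_i\rangle\langle f_i,h_j\rangle$ along with the inequality itself in one stroke. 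Both arguments are sound. The paper's approach buys a clean reduction to the already-proved matrix case at the cost of a (mildly delicate) limiting argument in which one must track that all four absolute values converge appropriately; your approach avoids any limit but requires the explicit convergence bookkeeping for the double series, which you do supply. The only cosmetic caveat in your write-up is the pairing of the index sets when $\ker A$ and $\ker A^\ast$ have different dimensions; since the corresponding $a_i$ vanish, those terms drop out of every sum and this is harmless.
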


\begin{proof}
As in the proof of Theorem \ref{absolute} we use the singular values $a_i$ and basis sets $e_i$, $f_i$,
here $e_i\in\Hh$, $f_i\in\K$,        such that $Ae_i=a_i f_i$, and analogously  $B g_i=b_i h_i$.
In Dirac's notation $A=\sum_i|f_i\rangle a_i\langle e_i|$ and $B=\sum_i|h_i\rangle b_i\langle g_i|$.
$A$  being in the Hilbert-Schmidt class means
$$\Tr_\Hh \, A^\ast A= \Tr_\K \, A\tdt A^\ast =\Tr_\Hh \, |A|^2=\Tr_\K \,|A^\ast|^2=\left(\sum_i a_i^2\right)^{1/2}<\infty ,$$
and the analogue for $B$.
Introducing the operators with finite rank
\beq
 A_N=\sum_i^N|f_i\rangle a_i\langle e_i|  \qquad \textrm{and} \qquad B_N=\sum_i^N|h_i\rangle b_i\langle g_i|,
\eeq
one can apply  Theorem \ref{baby} to the matrices which represent these operators,
giving the inequality (\ref{HSinequ}) for $A_N$ and $B_N$.
One observes the convergences in norm:
$$\|A-A_N\|=\|A^\ast-A^\ast_N\|=\|\,|A|-|A_N|\,\|=\|\,|A^\ast|-|A^\ast_N|\,\|=
\left(\sum_{N+1}^\infty a_i^2\right)^{1/2} \rightarrow_{\small N\rightarrow\infty} 0,$$
and the same for $B$.
The Hilbert-Schmidt inner products are jointly norm-continuous in both factors,
so each side of the inequality (\ref{HSinequ}) for $A_N$ and $B_N$ converges as $N\rightarrow\infty$,
giving the same inequality without the $N$ as an index.
\end{proof}

Applications are new proofs for H\"{o}lder type inequalities used in mathematical physics.
They will be discussed in a following article.

\section{Comparison with another use of absolute values}\label{otheruse}

The product $A^\ast\tdt B$ can be represented as
\beq\label{product}
A^\ast\tdt B = U\tdt|A^\ast|\tdt|B^\ast|\tdt V^\ast,
\eeq
by extending the $m\times n$ matrices to $N\times N$ matrices, where $N=\max\{m,n\}$,
and using the polar decompositions $A^\ast = U\tdt |A^\ast|$,  $B^\ast = V\tdt |B^\ast|$
with unitary operators $U$, $V$.
(Equivalently, one can stay with the $m\times n$ matrices and use isometries $U$ and $V$
instead of unitary operators.)
This equality implies that the set of singular values of $A^\ast\tdt B$ is identical to
that of $|A^\ast|\tdt|B^\ast|$. (Eventually, when staying with $m\neq n$, the numbers of zeroes are different.)
So, all the unitarily invariant norms, see \cite{RB97}, are identical.
One identity is for the operator norm
\beq
\| A^\ast\tdt B \| = \| \,|A^\ast|\tdt|B^\ast|\,\|,
\eeq
another one gives
\beq
\Tr\, |A^\ast\tdt B| = \Tr\,|\,(|A^\ast|\tdt|B^\ast|)\, |.
\eeq
Together with $|\Tr M|\leq \Tr \, |M|$, which holds for each matrix, we get
\beq \label{weakeruse}
\left| \Tr\, A^\ast\tdt B \right| \leq \Tr\,|\,(|A^\ast|\tdt|B^\ast|)\, |.
\eeq
This inequality is not as sharp as the baby inequality given in Theorem \ref{absolute}.
As an example use the same matrices as in equation (\ref{twomatrices}).
They give $1$ on the l.h.s. but $\sqrt{2}$ on the r.h.s. of (\ref{weakeruse}).

\end{document}